\newcommand{\bv}{\mathcal{BV}}
\newcommand{\cobar}{\Omega}
\newcommand{\del}{\partial}
\newcommand{\hyc}{\mathcal{H}y}
\newcommand{\id}{\mathrm{id}}
\newcommand{\Oo}{\mathcal{O}}
\newcommand{\ash}{{\scriptstyle \text{\rm !`}}}
\newcommand{\Star}{\bigstar}
\newcommand{\sS}{\mathbb{S}}
\newcommand{\wtm}{\sqcup}
\theoremstyle{plain} 
\newtheorem{lemma}{Lemma}
\newtheorem*{mainthm}{Main Theorem}
\newtheorem{proposition}[lemma]{Proposition}
\newtheorem{corollary}[lemma]{Corollary}
\theoremstyle{definition}
\newtheorem*{question*}{Question}
\newtheorem{definition}[lemma]{Definition}
\newtheorem*{construction}{Construction}
\newtheorem*{remark}{Remark}
\title[Formal formality of low dimensional Calabi-Yaus]{Formal formality of the hypercommutative algebras of low dimensional Calabi-Yau varieties}
\author{Gabriel C. Drummond-Cole}
\thanks{This material is based upon work supported by the National Science Foundation under Award No. DMS-1004625.}
\begin{document}
\maketitle

\begin{abstract}
There is a homotopy hypercommutative algebra structure on the cohomology of a Calabi-Yau variety.  The truncation of this homotopy hypercommutative algebra to a strict hypercommutative algebra is well-known as a mathematical realization of the genus zero B-model.  It is shown that this truncation loses no information for some cases, including all Calabi-Yau $3$-folds.
\end{abstract}
\section{Introduction}
In~\cite{BarannikovKontsevich:FMFLAPV}, Barannikov and Kontsevich gave a mathematical construction of the genus zero B-model with a Calabi-Yau variety $M$ of arbitrary dimension as the target.\footnote{By Calabi-Yau variety, we shall always mean a smooth compact K\"ahler manifold with $H^1=0$ and a nowhere vanishing holomorphic volume form.}
They started by considering a linear space $B$, namely the Dolbeaut resolution of the sheaf of holomorphic polyvector fields.  Using the various geometric structures present on $M$, they gave $B$ the structure of a differential graded $\bv$-algebra.  Next, they showed how to use this $\bv$ structure to induce the structure of a hypercommutative algebra, or $\hyc$-algebra, on the cohomology of $B$.  

In~\cite{DrummondColeVallette:MMBVO}, Bruno Vallette and the author investigated the Barannikov-Kontsevich passage from a $\bv$-algebra to a $\hyc$-algebra in a homotopical, operad-theoretic context.  There, it was shown that the output of the Barannikov-Kontsevich construction is properly thought of as the first piece of a {\em homotopy} $\hyc$-algebra structure on the cohomology of $B$ which can be truncated to the strict $\hyc$-algebra structure of Barannikov-Kontsevich.  One can easily construct examples where this truncation loses information.  That is, in such an example, the higher pieces of this homotopy $\hyc$-algebra structure contain information that is not encapsulated in the truncated strict $\hyc$-algebra structure.  As these examples do not arise from Calabi-Yau varieties, it is natural to ask:
\begin{question*}
Let $B$ be the Barannikov-Kontsevich $\bv$-algebra constructed from a Calabi-Yau variety. Is the homotopy $\hyc$-algebra structure on the complex-valued cohomology of $B$ formal?  In other words, is the structure on the cohomology of $B$ isomorphic to one where all the higher homotopy operations vanish?
\end{question*}
The purpose of this paper is to give an affirmative answer in the case of all Calabi-Yau varieties in dimensions two and three and the class of projective hypersurfaces in dimension four.  Specifically, the theorem is as follows:
\begin{mainthm}
Let $M$ be any Calabi-Yau variety of dimension two or three, or a Calabi-Yau variety of dimension four which has nonzero Hodge numbers only in bidegrees where a hypersurface of dimension four has nonzero Hodge numbers. Then the induced homotopy $\hyc$ structure on the cohomology of the associated $\bv$ algebra $B$ is formal.  
\end{mainthm}

The paper is organized as follows: the next two sections describe how to transfer a $\bv$ structure satisfying special conditions on a complex $B$ to a homotopy $\hyc$ structure on the cohomology of $B$ given appropriate transfer data.  Section~\ref{sec:CY} reviews the Calabi-Yau example and describes the transfer data.  Section~\ref{sec:topandbottom} uses a universal geometric argument to show that the product is the only transfered operation that can land in the top degree, and the final section uses degree-counting arguments to show that there are no higher operations in low dimension.  It is in this sense that the formality of the hypercommutative algebra is a {\em formal} formality: it does not depend on intricate analysis of the geometry, but rather on coarse topological data and degree mismatches.  In order to keep this paper brief we shall only cursorily review the operadic machinery~\cite{LodayVallette:AO}, definitions and properties of $\hyc$-algebras~\cite{Getzler:OMSGZRS}, $\bv$-algebras~\cite{Getzler:BVATDTFT}, or their homotopy versions~\cite{Getzler:OMSGZRS,GalvezCarilloTonksVallette:HBVA}, or K\"ahler geometry~\cite{GriffithsHarris:PAG}.  Linear maps, operad, and algebraic structures are all taken in a differential graded context over the complex numbers.

\section{Transfer of homotopy $\bv$-algebra structures and formal formality of the unit}
In this section, $\Oo$ shall be an operad, and $B$ shall be an $\Oo$-algebra with differential $d$ and cohomology $H$.

Loday-Vallette~\cite{LodayVallette:AO} is a good reference for the details of the general theory behind this section, which are greatly abbreviated.
\begin{definition}
Given any (possibly inhomogeneous) quadratic presentation of $\Oo$ one can produce a {\em Koszul dual} cooperad $\Oo^\ash$ cogenerated by the shifted generators of $\Oo$.  The cooperad $\Oo^\ash$ is equipped with a canonical map of $\sS$-modules $\kappa:\Oo^\ash\to \Oo$ which projects onto the cogenerators $\Oo^\ash$ and then includes them isomorphically as the space of generators of $\Oo$.

We call a presentation {\em good} if the map of operads $\cobar\Oo^\ash\to \Oo$ induced by $\kappa$ induces an isomorphism on homology, where $\cobar$ is the cobar functor.
\end{definition}
\begin{definition}
Let $\Oo$ be equipped with a good presentation.  A {\em homotopy $\Oo$-algebra} is an algebra over the operad $\cobar \Oo^\ash$.
\end{definition}
\begin{definition}
We call an element of $\Oo^\ash$ a {\em higher homotopy operation} if it is in the kernel of the map $\cobar\Oo^\ash\to\Oo$.
\end{definition}
\begin{definition}
{\em Linear transfer data} for $B$ consists of {\em linear} maps $\iota:H\to B$ and $\pi:B\to H$, along with a chain homotopy $h:B\to B$ so that $\pi\iota=\id_H$ and $dh+hd=\id_B-\iota\pi$

We say that linear transfer data $\iota$, $\pi$, and $h$ {\em satisfy the side conditions} if $h\iota=h^2=\pi h=0$.  
\end{definition}
\begin{construction}
Let $(\iota,\pi,h)$ be linear transfer data for $B$.  Suppose we are given an $\sS$-module $X$ and a map of $\sS$-modules $\kappa:X\to \Oo$.  

Then for any tree $T$ with $k$ leaves and vertices decorated by $X$, we can define an operation $H^{\otimes k}\to H$.  Redecorate $T$ as follows: decorate the leaves of $T$ with $\iota$, the internal edges by $h$, apply $\kappa$ to the vertex decorations, and decorate the root with $\pi$.  Then following the tree from leaves to root gives a recipe for composition of operations, most of which stays in $B$, using the $\Oo$-algebra structure and the transfer data.
\end{construction}
This construction is the standard way to define the transfered homotopy $\Oo$-algebra structure on $H$.
\begin{proposition}Let $\Oo$ be equipped with a good presentation, and let $(\iota,\pi,h)$ be linear transfer data for $B$.  There is a {\em transfered} homotopy $\Oo$-algebra structure on $H$ which extends the $\Oo$-algebra structure induced by that on $B$, and so that $\iota$ is the first term of a homotopy $\Oo$ quasi-isomorphism from $H$ to $B$.  Each generating operation in the transfered homotopy $\Oo$-algebra structure can be written as a sum of operations on $H$ induced by trees decorated with cogenerators of the cooperad $\Oo^\ash$ as in the construction.
\end{proposition}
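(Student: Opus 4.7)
The plan is to deduce this proposition as a direct application of the Homotopy Transfer Theorem (HTT) in the form proved in Loday--Vallette~\cite{LodayVallette:AO} for algebras over the cobar construction of a cooperad, specialized to $\mathcal{C} = \Oo^\ash$. The only preliminary step is to reinterpret the $\Oo$-algebra $B$ as a $\cobar \Oo^\ash$-algebra, and then the transfer is automatic.

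First I would observe that because the presentation is good, the canonical operad map $p\colon \cobar \Oo^\ash \to \Oo$ induced by $\kappa$ is a quasi-isomorphism. Composing the $\Oo$-algebra structure map $\Oo \to \mathrm{End}_B$ with $p$ produces a $\cobar \Oo^\ash$-algebra structure on $B$ in which every higher homotopy operation acts by zero (since such operations lie in the kernel of $p$ by definition). In particular, $B$ is a homotopy $\Oo$-algebra in a trivial but nontrivially-indexed way, which is precisely the input needed for HTT.

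Next I would apply HTT to the contraction $(\iota,\pi,h)$. The theorem produces a $\cobar \Oo^\ash$-algebra structure on $H$ whose generating operation indexed by a cogenerator $\nu \in \Oo^\ash(k)$ is the sum, over rooted trees $T$ with $k$ leaves whose vertices are decorated by cogenerators of $\Oo^\ash$ and whose iterated partial cocomposition in $\Oo^\ash$ returns $\nu$, of the operation $H^{\otimes k}\to H$ produced by the construction above with $X$ the space of cogenerators. This is exactly the content of the final sentence of the proposition. The side conditions $h\iota=h^2=\pi h=0$, while not logically necessary to run HTT, are what ensure that this tree sum is correctly indexed by indecomposable trees (rather than by arbitrary trees whose decorations would otherwise need to be desymmetrized); under these conditions the transferred structure extends the strict $\Oo$-algebra structure induced on $H$ by $B$, because the trees contributing to the weight-zero (corolla) operations reduce to the straightforward $\iota$-$\pi$-sandwich of the $\Oo$-operation. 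The extension of $\iota$ to a homotopy $\Oo$ quasi-isomorphism is given by a parallel tree formula in which the root decoration $\pi$ is replaced by $h$ and the output is taken in $B$ rather than $H$; it is a quasi-isomorphism on the nose since its linear part $\iota$ already induces an isomorphism on cohomology.

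The conceptually hard step --- verifying that the tree sums satisfy the $\cobar \Oo^\ash$-algebra relations --- is entirely absorbed into HTT, where it follows from the interpretation of transferred structures as Maurer--Cartan elements in a convolution dg Lie algebra and the compatibility of the contraction with the convolution bracket. Since this general machinery is not what the paper is about, I would not reprove it; instead I would simply cite the relevant chapter of Loday--Vallette and verify that the hypotheses of their formulation match the present setup. The real work of the paper begins only after this proposition is in hand.
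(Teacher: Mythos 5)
Your proposal is correct and matches the paper's approach: the paper offers no proof of this proposition at all, deferring (as you do) to the Homotopy Transfer Theorem of Loday--Vallette applied to the contraction $(\iota,\pi,h)$ and the quasi-isomorphism $\cobar\Oo^\ash\to\Oo$ coming from the good presentation. One small caution: the proposition assumes only linear transfer data, not the side conditions, and the extension of the strict structure on the corolla operations already follows from $\pi\iota=\id_H$ alone, so you should not phrase that part as depending on $h\iota=h^2=\pi h=0$.
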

\begin{proposition}
Suppose $\Oo$ has no internal differential.  Then the homotopy $\Oo$-algebra structure on $H$ can be truncated to the {\em strict} $\Oo$-algebra structure on $H$ induced by that on $B$ by forgetting all higher homotopy operations and arbitrarily choosing representatives in $\cobar\Oo^\ash$ for operations in $\Oo$.
\end{proposition}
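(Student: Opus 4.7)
The plan is to construct the strict structure $\psi:\Oo\to\mathrm{End}_H$ by lifting each element of $\Oo$ to a cocycle in $\cobar\Oo^\ash$, then verify well-definedness and identify the result with the induced cohomology structure on $H^*(B)$.

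First, observe that the homotopy $\Oo$-algebra structure on $H$ is a morphism of differential graded operads $\phi:\cobar\Oo^\ash\to\mathrm{End}_H$. Since $H$ carries the zero differential (it is the cohomology of $B$), the target $\mathrm{End}_H$ also has zero differential. For each $\mu\in\Oo$, I would choose a cocycle $\tilde\mu\in\cobar\Oo^\ash$ mapping to $\mu$ under the Koszul map $\cobar\Oo^\ash\to\Oo$; such lifts exist because this map is a surjective quasi-isomorphism and $\Oo$ has zero differential. Define $\psi(\mu):=\phi(\tilde\mu)$.

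The main step is well-definedness. The good presentation hypothesis, together with the long exact sequence in cohomology associated to the short exact sequence of DG $\sS$-modules $0\to K\to \cobar\Oo^\ash\to\Oo\to 0$, forces the kernel $K$ to be acyclic. So if $\tilde\mu_1$ and $\tilde\mu_2$ are two cocycle lifts of the same $\mu$, their difference is a cocycle in $K$, hence a coboundary $\tilde\mu_1-\tilde\mu_2=d\sigma$. Applying $\phi$ gives $\phi(\tilde\mu_1)-\phi(\tilde\mu_2)=d\phi(\sigma)=0$ because $\mathrm{End}_H$ has zero differential. To promote $\psi$ to an operad morphism, I would choose lifts of generators arbitrarily and define lifts of composites via operadic composition in $\cobar\Oo^\ash$; this yields a morphism on the nose, and well-definedness guarantees independence of the choices.

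Finally, I would identify $\psi$ with the strict structure on $H^*(B)$ induced by that of $B$. For a generator $\mu$ of $\Oo$ the canonical cocycle lift is the corolla in $\cobar\Oo^\ash$ decorated by the corresponding cogenerator of $\Oo^\ash$; the construction from the preceding section sends this corolla to $\pi\circ\mu_B\circ\iota^{\otimes n}$, which on cohomology classes coincides with $[\mu_B(\,\cdot\,)]$ since $\mu_B$ preserves cocycles. The two operad morphisms $\psi$ and the cohomology structure thus agree on generators, hence everywhere. The only real obstacle is the acyclicity of $K$, which is precisely what the good presentation supplies; everything else follows from chasing the definitions.
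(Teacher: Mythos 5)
Your argument is correct, and in fact the paper offers no proof of this proposition at all --- it is one of the statements deferred to the ``greatly abbreviated'' general theory in Loday--Vallette --- so yours is a legitimate reconstruction rather than a variant of an argument in the text. The skeleton is right: $\phi:\cobar\Oo^\ash\to\mathrm{End}_H$ lands in an operad with zero differential, the good-presentation hypothesis plus the absence of an internal differential on $\Oo$ makes $\cobar\Oo^\ash\to\Oo$ a surjective quasi-isomorphism with acyclic kernel $K$, and hence $\phi$ takes the same value on any two \emph{closed} lifts of a given operation; multiplicativity comes from lifting composites by composing lifts, and the identification with the induced structure on $H$ reduces to evaluating $\phi$ on corollas, where the Construction gives $\pi\circ\mu_B\circ\iota^{\otimes n}=[\mu_B(\cdot)]$ because $\pi$ returns the class of a cocycle (via $\id_B-\iota\pi=dh+hd$). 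One point deserves emphasis that you pass over quickly: the word ``arbitrarily'' in the statement must be read as ``any \emph{cocycle} representative,'' exactly as you implicitly do. Your well-definedness step uses that a cocycle in $K$ is a coboundary; for a non-closed lift the difference of two lifts is merely an element of $K$, on which $\phi$ need not vanish --- indeed the nonvanishing of $\phi$ on higher homotopy operations in $K$ is the entire subject of the paper. It would also be worth one sentence verifying that the corolla on a cogenerator is in fact closed in $\cobar\Oo^\ash$ (the decomposition part of the cobar differential vanishes on weight-one elements, and the internal differential of $\Oo^\ash$ lowers weight, so it sends cogenerators into the coaugmentation, which is quotiented out); without this the ``canonical cocycle lift'' is an assertion rather than an observation. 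With those two small points made explicit, the proof is complete.
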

Galvez-Carillo-Tonks-Vallette described a good presentation of the $\bv$ operad explicitly~\cite{GalvezCarilloTonksVallette:HBVA}.  For our purposes, the following facts suffice:
\begin{proposition}
The Koszul dual $\bv^\ash$ is cogenerated by the shifted duals of the binary product and bracket along with the unary $\Delta$ operator.  The Koszul dual $\bv^\ash$ can be graded by the number of $\Delta$ appearing in an element, and the differential reduces this grading by one.  

Call an element Lie type if every vertex is decorated by a bracket.  Every Lie type element is a coboundary.
\end{proposition}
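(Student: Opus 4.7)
The plan is to apply the inhomogeneous Koszul duality machinery of Galvez-Carillo-Tonks-Vallette~\cite{GalvezCarilloTonksVallette:HBVA} to the standard quadratic-linear presentation of $\bv$. Recall that $\bv$ is generated by a symmetric binary product $\mu$ (degree $0$), a symmetric binary bracket $\beta$ (degree $1$), and the unary $\Delta$ (degree $1$), subject to associativity, Jacobi, Leibniz, $\Delta^2=0$, and the one inhomogeneous relation
\[
\beta = \Delta \circ \mu - \mu \circ (\Delta \otimes \id) - \mu \circ (\id \otimes \Delta).
\]
The first claim, that $\bv^\ash$ is cogenerated by the shifted linear duals of $\mu$, $\beta$, and $\Delta$, is then immediate from the general construction of the Koszul dual cooperad.

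For the grading claim, I would introduce a grading on the free cooperad on $\mu^*$, $\beta^*$, $\Delta^*$ by counting occurrences of $\Delta^*$ in each tree representative. All of the purely quadratic co-relations (duals of associativity, Jacobi, Leibniz, and $\Delta^2=0$) are homogeneous in this grading, so it descends to $\bv^\ash$. The internal differential is the coderivation induced by the inhomogeneous relation above; since its left-hand side has $\Delta$-count zero and its right-hand side has $\Delta$-count one, this coderivation lowers $\Delta$-count by exactly one.

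For the third claim, the inhomogeneous relation itself supplies an explicit coboundary recipe. Set
\[
p := \Delta^* \circ \mu^* - \mu^* \circ (\Delta^* \otimes \id) - \mu^* \circ (\id \otimes \Delta^*),
\]
an element of $\bv^\ash$ of $\Delta$-count one which, by construction, satisfies $dp = \pm\beta^*$. Given a Lie type element $T$, pick any vertex $v$ and locally substitute $p$ for the $\beta^*$ at $v$, producing $\widetilde T$ of $\Delta$-count one. The coderivation $d$ vanishes on cogenerators individually, and among weight-two patterns acts nontrivially only on $p$ (sending it to $\pm \beta^*$); so the inserted patch is the only piece contributing to $d\widetilde T$, giving $d\widetilde T = \pm T$ and exhibiting $T$ as a coboundary.

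The main subtlety I expect is bookkeeping: $\bv^\ash$ is a quotient of the free cooperad rather than the free cooperad itself, so the local substitution must be checked to be well defined modulo co-relations, and Koszul signs need to be tracked through the coderivation rule. Neither presents genuine difficulty once the inhomogeneous Koszul framework is in place, and no intricate combinatorial input beyond sign bookkeeping is required.
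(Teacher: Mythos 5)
First, a point of comparison: the paper does not prove this proposition at all --- it is stated as a package of facts imported wholesale from G\'alvez-Carrillo--Tonks--Vallette~\cite{GalvezCarilloTonksVallette:HBVA}, so your argument can only be measured against that reference. Your treatment of the first two claims follows the standard inhomogeneous Koszul duality route and is essentially fine, with one caveat: the quadratic-linear presentation of $\bv$ used there must also include the quadratic relation expressing that $\Delta$ is a derivation of the \emph{bracket}. Although this is a consequence of the other relations in the operad $\bv$, it is not in the span of the relations you list inside the free operad, and the maximality condition on $R$ required by the quadratic-linear Koszul machinery forces it to be included; omitting it changes $qR$ and hence $\bv^\ash$.

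The genuine gap is in the third claim, and it sits exactly at the point you flag and then wave away. The Koszul dual cooperad is not a quotient of the free cooperad by co-relations; it is a \emph{sub}-cooperad of the cofree conilpotent cooperad, namely the largest one all of whose weight-two corestrictions land in $s^2qR$. So the issue is not whether the local substitution is ``well defined modulo co-relations'' but whether $\widetilde T$ is an element of $\bv^\ash$ in the first place, and the naive one-vertex substitution is not: at the edges joining the inserted $\mu^*$ and $\Delta^*$ to the neighbouring brackets of $T$, membership forces the corresponding two-vertex corestrictions to lie in the Leibniz co-relation and the ``$\Delta$ is a derivation of the bracket'' co-relation, constraints that a single substituted tree violates whenever $v$ has bracket-decorated neighbours. (The recipe does work verbatim in arity two, where $p$ itself spans the relevant piece of $s^2qR$, which is presumably the intuition behind it.) Once $\widetilde T$ is corrected to an honest element of $\bv^\ash$, the correction terms redistribute $\Delta^*$ and $\mu^*$ through the tree and can create further $\Delta$--$\mu$ adjacencies, so the assertion that the inserted patch is the only contribution to $d\widetilde T$ must also be revisited. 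The argument in the literature instead identifies the underlying $\sS$-module of $\bv^\ash$ with $T^c(\delta)\otimes\mathcal{G}^\ash$ (with $\mathcal{G}$ the Gerstenhaber operad) and computes $d_\varphi$ explicitly in that normal form; a Lie-type element of arity at least two is then cobounded by $\delta$ attached to the sum over its full two-factor cooperadic decomposition, of which your patch is the correct germ only at the root and only after summing over that decomposition.
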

\begin{definition}
In a homotopy $\bv$-algebra, we say that $e$ is a formal unit if $e\cdot x=x$ and any operation in the orthogonal complement of the product and the identity gives $0$ if $e$ is an argument.
\end{definition}
\begin{lemma}\label{lemma:formalunit}
Suppose we are given linear transfer data satisfying the side conditions between a $\bv$-algebra $B$ and $H$, and that $B$ has a unit $e$ so that $\iota([e])=e$.  Then $[e]$ is a formal unit in $H$.
\end{lemma}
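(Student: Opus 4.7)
The plan is to analyze, tree by tree, what each transferred operation computes when $\iota([e]) = e$ is inserted at one of its leaves, and to show that every decorated tree other than the single-vertex product tree produces zero. Two standard consequences of the unital $\bv$-axioms are invoked: $\Delta(e) = 0$ and $[e, x] = 0$ for every $x \in B$. Both follow quickly from the derivation properties (e.g.\ $[e, x] = [e \cdot e, x] = 2[e, x]$ forces $[e, x] = 0$, and then the identity $[e, x] = -\Delta(e) \cdot x$ from the BV relation gives $\Delta(e) = 0$).

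Fix a tree $T$ whose vertices are decorated by cogenerators of $\bv^\ash$, and let $\ell$ be the leaf where $e$ is plugged in, with $v$ the vertex immediately above $\ell$. If $v$ is decorated by $\Delta$, its local contribution is $\Delta(e) = 0$; if $v$ is decorated by the bracket, its local contribution is $[e, \cdot] = 0$. The only remaining case is that $v$ is decorated by the product, in which case the output at $v$ equals the other input $b$. That input $b$ is either a value $\iota(x)$ coming from a sibling leaf or a value $h(\cdots)$ coming from a sibling subtree, and $b$ is then routed downward either through an internal edge decorated by $h$ or through $\pi$ at the root. The side conditions $h\iota = 0$, $h^2 = 0$, and $\pi h = 0$ annihilate three of the four resulting subcases. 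The sole survivor is that $v$ is the root and $b = \iota(x)$ comes from a sibling leaf, in which case the tree-operation evaluates to $\pi\mu(\iota([e]), \iota(x)) = x$.

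Consequently, the only decorated tree contributing non-trivially once $[e]$ is plugged in is the single-vertex product tree with two leaves, and it produces the operation $[e] \cdot x = x$ making $[e]$ a left unit for the product. Any element of $\bv^\ash$ in the complement of the product cogenerator and the identity has a tree-expansion without this single-vertex term, so the corresponding transferred operation vanishes whenever $[e]$ is an argument. This is exactly the defining property of a formal unit. The main (modest) obstacle is the routing analysis in the product-vertex case, where all three side conditions are used in concert; the bracket and $\Delta$ cases are immediate from the unit identities.
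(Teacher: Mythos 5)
Your proof is correct and follows essentially the same route as the paper: a case analysis on the vertex just above the leaf carrying $e$, using $[e,\cdot]=0$ and $e\cdot x=x$ together with the side conditions $h\iota=h^2=\pi h=0$ to annihilate every decorated tree except the single-vertex product. You are in fact slightly more thorough than the paper's two-line proof, which does not explicitly address the case of a $\Delta$-decorated vertex above $e$ (handled by your observation that $\Delta(e)=0$).
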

\begin{proof}
Any transfered operation with an argument of $[e]$ must either take the bracket of $e$ with something, which is always $0$, or multiply by $e$, which is the identity.  If the tree governing this operation has any internal edges, then this results in either $hh$, $h\iota$, or $\pi h$ in the composed operation.
\end{proof}

\section{Interpreting a trivialized homotopy $\bv$-algebra as a homotopy $\hyc$-algebra}
Getzler~\cite{Getzler:OMSGZRS} described a good presentation of the $\hyc$ operad. The following proposition connects the resulting Koszul dual cooperad $\hyc^\ash$ to the cooperad $\bv^\ash$ described by Galvez-Carillo-Tonks-Vallette.
\begin{proposition}
There is a quotient of cooperads $\bv^\ash\to \hyc^\ash$, given by projecting onto the $\Delta^0$ grading and then quotienting further by the image of the differential.  
\end{proposition}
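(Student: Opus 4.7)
The plan is to verify that each of the two stated operations is a cooperad quotient and then to identify the resulting object with Getzler's $\hyc^\ash$.

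First I would show that projection onto the $\Delta^0$ summand is a cooperad quotient. The cogenerator $\Delta$ carries $\Delta$-grading one while the product and bracket cogenerators carry grading zero, so cocomposition in $\bv^\ash$ preserves the total $\Delta$-count. Hence the span of elements with $\Delta$-grading $\geq 1$ is a coideal, and the projection to the $\Delta^0$ quotient, which I denote $\bv^\ash_0$, is a map of cooperads.

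Since $d$ lowers $\Delta$-grading by one, the image of $d$ in $\bv^\ash_0$ is exactly $d(\bv^\ash_1)$. The coderivation identity for $d$, followed by projection to $\bv^\ash_0 \circ \bv^\ash_0$, expresses the cocomposition of an element of $d(\bv^\ash_1)$ as a sum of terms landing in $d(\bv^\ash_1) \circ \bv^\ash_0 + \bv^\ash_0 \circ d(\bv^\ash_1)$, since $d$ acts as zero on $\bv^\ash_0$. Hence $d(\bv^\ash_1)$ is a coideal in $\bv^\ash_0$, so the further quotient is again a cooperad quotient.

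The main obstacle is identifying this quotient with $\hyc^\ash$. By the previous proposition every Lie-type element of $\bv^\ash_0$ is a coboundary and so vanishes in the quotient, matching the heuristic that $\hyc$-algebras have no primitive bracket. What remains must then be matched, both as a cogenerating $\sS$-module and as a cooperad, with Getzler's presentation of $\hyc^\ash$. This matching is the algebraic shadow of the Barannikov--Kontsevich construction, and I expect the comparison to reduce to their explicit formulae together with the structure of the Koszul duality between the Gerstenhaber piece of $\bv$ and its dual.
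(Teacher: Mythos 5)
Your two coideal arguments are correct and are exactly the content the paper itself supplies: the paper's proof consists of citing Theorem~2.1 and Proposition~2.20 of the Drummond-Cole--Vallette paper and then noting, as its ``only missing step,'' that the differential of $\bv^\ash$ reduces the $\Delta$-grading by one while the cocomposition preserves it --- which is precisely the observation driving both of your quotient verifications. Where you diverge is at the end: the identification of the resulting quotient with Getzler's $\hyc^\ash$, which you correctly flag as the main obstacle but leave as an expectation, is the part the paper outsources entirely to the cited results rather than proving. So as a self-contained argument your proposal is incomplete at exactly that point (the matching of cogenerators and cooperad structure with Getzler's presentation, including the fact that the Lie-type elements being coboundaries accounts for the whole kernel in the relevant degrees), but you have isolated the right missing ingredient and supplied, in more detail than the paper does, the grading argument that the paper treats as the only new step.
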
 
\begin{proof}
This is essentially proven as Theorem~2.1 and Proposition~2.20 of~\cite{DrummondColeVallette:MMBVO}.  The only missing step follows from the fact that the differential of $\bv^\ash$ reduces the $\Delta$ grading by one while the cooperad structure map preserves the $\Delta$ grading.
\end{proof}
\begin{definition}
A {\em strongly trivialized} homotopy $\bv$ structure is one where any generator containing a vertex decorated by $\Delta$ acts as zero.
\end{definition}
\begin{lemma}
A strongly trivialized homotopy $\bv$-structure on a cochain complex induces a homotopy $\hyc$-structure on that complex.
\end{lemma}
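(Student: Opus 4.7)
The plan is to reinterpret the data of a homotopy $\bv$-algebra structure on $V$ as a Maurer--Cartan twisting morphism $\alpha\colon\bv^\ash\to\operatorname{End}_V$ in the standard way. Under this translation, the strong trivialization hypothesis becomes the assertion that $\alpha$ vanishes on every element of positive $\Delta$-grading. The goal is then to descend $\alpha$ along the cooperad quotient $q\colon\bv^\ash\to\hyc^\ash$ of the preceding proposition to a map $\bar\alpha\colon\hyc^\ash\to\operatorname{End}_V$ and to verify that $\bar\alpha$ is itself a twisting morphism for $\hyc^\ash$; by the same standard correspondence this produces the promised homotopy $\hyc$-algebra structure.

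The first key step is to show that $\alpha$ factors through $q$ as a map of $\sS$-modules. The kernel of $q$ is the sum of the positive $\Delta$-grading part of $\bv^\ash$ together with $d\bigl((\bv^\ash)_{\Delta^{1}}\bigr)$ inside $(\bv^\ash)_{\Delta^{0}}$. Strong trivialization handles the first summand outright. For the second, I will apply the Maurer--Cartan equation to an element $y$ of $\Delta$-grading one: the terms involving $\alpha(y)$ vanish by strong trivialization, leaving $\alpha(dy) + (\alpha\star\alpha)(y)=0$. The crucial input is that the infinitesimal cooperad decomposition of $y$ preserves $\Delta$-grading, so every summand in $\Delta_{(1)}y$ has at least one factor of positive $\Delta$-grading, which $\alpha$ kills; hence $(\alpha\star\alpha)(y)=0$ and $\alpha(dy)=0$, so $\alpha$ descends to the desired $\bar\alpha$.

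The second step is to check that $\bar\alpha$ satisfies the Maurer--Cartan equation for $\hyc^\ash$. Since the differential on $\bv^\ash$ lowers $\Delta$-grading by one, it acts as zero on $(\bv^\ash)_{\Delta^{0}}$, and so $\hyc^\ash$ inherits the trivial differential. The Maurer--Cartan equation for $\bar\alpha$ thus reduces to $d_{\operatorname{End}}\bar\alpha + \bar\alpha\star\bar\alpha=0$. Evaluating at the class of a representative $y\in(\bv^\ash)_{\Delta^{0}}$, grading preservation ensures that $\Delta_{(1)}y$ stays in $\Delta^{0}\otimes\Delta^{0}$, so the quadratic term computed with $\bar\alpha$ coincides with the one computed with $\alpha$. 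The Maurer--Cartan equation for $\alpha$ on $y$ then gives the desired identity.

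The main obstacle is the first step: the quotient by the image of the differential within $\Delta$-grading zero is not directly controlled by the strong trivialization hypothesis, which only governs elements carrying a $\Delta$. Showing that the hypothesis nonetheless forces vanishing on $d\bigl((\bv^\ash)_{\Delta^{1}}\bigr)$ requires combining strong trivialization with the grading-preserving cooperad structure and the Maurer--Cartan relation. Once this interplay is in hand, the rest of the argument is essentially formal.
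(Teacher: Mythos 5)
Your argument is correct and is essentially the paper's own proof, recast carefully in the language of twisting morphisms: the paper likewise observes that strong trivialization kills everything outside $\Delta^0$, that boundaries of $\Delta^1$-elements act as zero, and that the structure equation then descends to the quotient $\hyc^\ash$. Your treatment of the ``main obstacle'' (using the Maurer--Cartan equation together with the $\Delta$-grading preservation of the infinitesimal decomposition to force vanishing on $d\bigl((\bv^\ash)_{\Delta^1}\bigr)$) is exactly the justification the paper leaves implicit, so no changes are needed.
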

\begin{proof}
The space of operations remaining after eliminating all operations with $\Delta$ is parameterized by the $\Delta^0$ part of the grading.  However, any boundary is the boundary of a generator that act as zero, and so acts as zero itself.  The terms in the differential of a homotopy $\bv$ operation are the sum of an operation that acts as zero and the differential of the same operation considered as a homotopy $\hyc$ operation.
\end{proof}
We can choose whatever representatives we like for the generators of the homotopy $\hyc$ operad.  For ease, we assume that we have chosen representative trees with no $\Delta$ anywhere, so our representatives are sums of trivalent trees with vertices decorated by products and brackets. We further assume that we have chosen a representative which is homogeneous in the number of brackets. 
\begin{definition}
We say that a homotopy $\hyc$-algebra is {\em obviously formal} if any generator containing more than one vertex decorated by the product acts as zero, and {\em formal} if it is isomorphic to an obviously formal structure.
\end{definition}
We shall describe a $\bv$-algebra and choose transfer data which strongly trivializes the transfered homotopy $\bv$-algebra on its cohomology in such a way that the induced homotopy $\hyc$-algebra is obviously formal.

\section{The Calabi-Yau example}\label{sec:CY}
We recall the Barannikov-Kontsevich construction~\cite{BarannikovKontsevich:FMFLAPV}. The transfer data is discussed by Merkulov~\cite{Merkulov:SHAKM}, and all of the commutation, adjunction, and other equalities pertaining to K\"ahler manifolds that we shall use can be found in Chapter 0 of Griffiths and Harris~\cite{GriffithsHarris:PAG}. 

For the remainder of the paper, $M$ shall be a Calabi-Yau variety of complex dimension $n$ and nonvanishing holomorphic volume form $\Omega$. 

We shall use the notation $B$ to refer to the bigraded complex
\[\Gamma(\wedge \overline{T}^*M \otimes \wedge TM)\]
with differential $\bar{\del}\otimes\id$.  $H$ shall denote the cohomology of $B$.

The map $\wedge TM\to \wedge T^{*}M$ given by contracting with $\Omega$ induces an isomorphism of cochain complexes between $B$ and the complex forms on $M$ with differential $\bar{\del}$, since in coordinates, $\Omega$ is nonvanishing and holomorphic.  

The complex $B$ is a commutative associative algebra, but contraction by $\Omega$ does not preserve this structure (the product on forms even has different degree).  The map $\del$ on $\Omega^* M$ induces a $\bv$ operator $\Delta$ on $B$, giving $B$ the structure of a $\bv$-algebra.  If we use the notation $\Star$ to denote both contracting by $\Omega$ and its inverse, then $\Delta=\Star \del \Star$.

Since $M$ is compact K\"ahler, the cohomology of this cochain complex coincides with its de Rham cohomology and the harmonic forms give $\bar{\del}$-representatives.  This gives us $\iota$ of our transfer data.  We shall work primarily in forms, suppressing $\Star$.  The product will be denoted:
\[\mu \wtm \nu = \Star(\Star \mu \wedge \Star \nu).\]
The projection $\pi$ is given in terms of the K\"ahler metric as follows: choosing an orthonormal basis $e_i$ of the harmonic forms, $\pi(\omega)=a_i [e_i]$, where
\[a_i=\int_M \omega\wedge \star e_i\]
where $\star \ne \Star$ is the Hodge star induced by the K\"ahler metric. The homotopy is of the form $\bar{\del}^* G=G\bar{\del}^*$ where $G$ is the inverse of the Laplacian on the orthogonal complement of the harmonic forms.  
\begin{lemma}\label{lemma:CYsideconditions}
$(\iota,\pi,h)$ constitute linear transfer data which satisfy the side conditions.
\end{lemma}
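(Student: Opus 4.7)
The plan is to verify the five required identities---$\pi\iota=\id_H$, the chain homotopy $dh+hd = \id_B - \iota\pi$, and the side conditions $h\iota = \pi h = h^2 = 0$---by direct computation using standard Hodge theory on a compact K\"ahler manifold. The key structural facts I will appeal to (all in Chapter 0 of Griffiths--Harris) are: the orthogonal Hodge decomposition of $B$ into harmonic forms and the image of the Laplacian $L = \bar{\del}\bar{\del}^* + \bar{\del}^*\bar{\del}$; that $G$ is inverse to $L$ on this image and vanishes on harmonic forms; and that $G$ commutes with $\bar{\del}$ and $\bar{\del}^*$.

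The identity $\pi\iota=\id_H$ falls out because $\iota$ selects the harmonic representative and the integration formula defining $\pi$ then expresses a harmonic form in the chosen orthonormal basis $\{e_i\}$, giving the original cohomology class back.

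For the chain homotopy, I would expand $\bar{\del}h + h\bar{\del} = \bar{\del}\bar{\del}^*G + \bar{\del}^*G\bar{\del} = (\bar{\del}\bar{\del}^* + \bar{\del}^*\bar{\del})G = LG$, using the commutation of $G$ with $\bar{\del}$. Since $LG$ acts as the identity on the image of $L$ and as zero on harmonic forms, and $\iota\pi$ is precisely the orthogonal projection onto harmonic forms, this gives exactly $\id_B - \iota\pi$.

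The three side conditions are then immediate from the same decomposition: $h\iota = \bar{\del}^*G\iota = 0$ because $G$ kills harmonic forms; $\pi h = \pi G\bar{\del}^* = 0$ since the image of $G$ (equivalently, of $\bar{\del}^*$) is orthogonal to the harmonic forms on which $\pi$ is supported; and $h^2 = \bar{\del}^*G\bar{\del}^*G = (\bar{\del}^*)^2 G^2 = 0$ by commutation of $G$ with $\bar{\del}^*$ together with $(\bar{\del}^*)^2 = 0$. There is no essential obstacle; the lemma is a bookkeeping verification against the K\"ahler identities, recorded here only to fix the transfer data used in the subsequent sections.
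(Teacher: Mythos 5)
Your proof is correct and follows essentially the same route as the paper: standard Hodge theory on a compact K\"ahler manifold, with $\bar{\del}h+h\bar{\del}$ identified as the Green--Laplacian composite and the side conditions read off from $(\bar{\del}^*)^2=0$, the $\bar{\del}^*$-closedness of harmonic forms, and orthogonality of $\operatorname{im}\bar{\del}^*$ to the harmonic forms. The only cosmetic difference is that the paper verifies $\pi h=0$ by an explicit adjointness computation with the integral formula for $\pi$, where you invoke the orthogonal decomposition directly.
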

\begin{proof}
The global inner product $(x,y)$ of forms is $\int_M x\wedge \star y$ so $\pi \iota$ is the standard decomposition in terms of the basis $[e_i]$.  $\bar{\del}h+h\bar{\del}=G\Delta$, which is $0$ on the harmonic forms and the identity on their orthogonal complement, as desired.  The composition $h\iota$ is zero because harmonic forms are $\bar{\del}^*$ closed, $hh$ is zero because $(\bar{\del}^*)^2=0$, and $\pi h$ is zero because up to sign and conjugation, \[\int_M \star \bar{\del}\star x \wedge \star e_i = (\star e_i, \bar{\del}\star x)=(\star \bar{\del} e_i,\star x)=0\]
since $e_i$ is $\bar{\del}$ closed.
\end{proof}

\begin{lemma}
The transfered homotopy-$\bv$-algebra on $H$ is strongly trivialized.
\end{lemma}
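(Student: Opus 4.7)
The plan is to verify that every tree decorated with cogenerators of $\bv^\ash$ that contains at least one vertex decorated by the unary $\Delta$ induces the zero operation on $H$. Since $\Delta$ is unary, any such vertex $v$ has a single input, coming either directly from a leaf (so preceded by $\iota$) or through an internal edge (so preceded by $h$), and a single output, going either to an internal edge (so followed by $h$) or to the root (so followed by $\pi$). I would carry out a purely local analysis at $v$ and show that in each of the four possible combinations the short composition localized at $v$ is already zero, so the entire tree vanishes regardless of what is happening elsewhere.

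For the computation I would suppress $\Star$ and work entirely with complex forms, where $\Delta$ becomes $\del$, the homotopy $h$ becomes $\bar{\del}^* G$, and $\iota$ lands in the $\bar{\del}$-harmonic forms. Because $M$ is K\"ahler, these harmonic forms agree with the $d$-harmonic forms and are in particular $\del$-closed. This immediately disposes of the two cases where the input to $v$ arrives from a leaf: $\Delta\iota(x) = \del$ applied to a harmonic form is zero, so the tree evaluates to zero whatever happens on the output side.

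In the remaining two cases the input arrives through an internal edge, and the local computation begins with $\Delta h(y) = \del \bar{\del}^* G(y)$. The K\"ahler identity that $\del$ and $\bar{\del}^*$ anticommute, together with the commutation of $G$ with $\bar{\del}^*$, rewrites this (up to sign) as $\bar{\del}^* \del G(y)$. If the output feeds another internal edge, composing with $h = \bar{\del}^* G$ and moving a $\bar{\del}^*$ past $G$ produces a $(\bar{\del}^*)^2$ and hence zero. If the output is the root, then $\pi \bar{\del}^* \del G(y) = 0$ because $\pi\bar{\del}^*$ already vanishes, by the identity used in the proof of Lemma~\ref{lemma:CYsideconditions}.

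The step I expect to require the most attention, though no real difficulty, is checking that the $\Star$-identification with complex forms genuinely intertwines all of $\bar{\del}^*$, $G$, and the notion of harmonicity, so that the form-side calculations above faithfully compute the polyvector-side compositions defined in the construction. Once this is in place, the four case analysis above is immediate.
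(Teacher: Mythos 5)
Your proof is correct and follows essentially the same route as the paper: a local case analysis at the $\Delta$-vertex showing $\del\iota=0$ (harmonic forms are $\del$-closed), $h\del h=0$ (via the K\"ahler (anti)commutation of $\del$ and $\bar{\del}^*$ and $(\bar{\del}^*)^2=0$), and vanishing when the output hits $\pi$. The only minor divergence is in that last case: the paper proves $\pi\del=0$ directly by Stokes' Theorem, whereas you commute $\del$ past $\bar{\del}^*$ and reuse the $\pi\bar{\del}^*=0$ computation from Lemma~\ref{lemma:CYsideconditions}; both are valid.
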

\begin{proof}
If a vertex in a tree is decorated by $\Delta$, it is sandwiched between two edges which are decorated with $\iota$, $\pi$, or $h$.  The composition $\del \iota=0$ because harmonic forms are $\del$-closed.  The composition $\pi \del=0$ because \[\int_M \del x \wedge \star e_i = \int_M \del (x\wedge \star e_i) \pm \int_M x\wedge \star \del^* e_i;\]
the first term vanishes by Stokes' Theorem and the second because harmonic forms are $\del^*$-closed.  The remaining possibility is for both edges to be decorated with $h$, but the composition $h\del h$ is equal to
\[G\bar{\del}^* \del \bar{\del}^* G\]
which is zero since $\del$ and $\bar{\del}^*$ commute and square to zero.
\end{proof}
\begin{corollary}
The transfered homotopy $\bv$-algebra on $H$ induces a quotient homotopy $\hyc$-algebra on $H$.
\end{corollary}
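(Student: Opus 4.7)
The plan is to invoke the two preceding lemmas in sequence. By the immediately preceding lemma, the transferred homotopy $\bv$-algebra structure on $H$ is strongly trivialized: every generator whose underlying tree has a vertex decorated by $\Delta$ acts as zero, because every occurrence of $\Delta$ in such a composition is sandwiched between two of $\iota$, $\pi$, $h$, and each of the four possible sandwiches ($\del\iota$, $\pi\del$, $h\del h$, and combinations involving the side conditions) has been shown to vanish.

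Then I would apply the lemma from Section~3, which says that any strongly trivialized homotopy $\bv$-structure descends to a homotopy $\hyc$-structure by the cooperad quotient $\bv^\ash\to\hyc^\ash$. That quotient was identified as the projection onto the $\Delta^0$ part followed by the quotient by the image of the differential. Since the operations in the kernel of this quotient are precisely those generated by elements containing a $\Delta$-vertex together with the image of the $\bv^\ash$-differential, and since both types act as zero (the first by strong trivialization, the second because every boundary in $\bv^\ash$ is the boundary of something containing a $\Delta$), the transferred structure factors through $\hyc^\ash$ and so defines a genuine homotopy $\hyc$-algebra structure on $H$.

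There is no real obstacle here: the corollary is a formal consequence of the two lemmas, with the only thing to verify being that the language ``quotient homotopy $\hyc$-algebra'' is the right one, i.e. that the $\hyc$-structure is obtained by restricting the transferred $\bv_\infty$-structure along the cooperad map $\bv^\ash\to\hyc^\ash$ rather than by an independent construction. This is exactly the content of the Section~3 lemma, so nothing further needs to be said.
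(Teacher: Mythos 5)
Your proposal is correct and matches the paper's (implicit) argument: the paper states this corollary without proof precisely because it is the immediate concatenation of the strong-trivialization lemma for the Calabi--Yau transfer data with the Section~3 lemma that a strongly trivialized homotopy $\bv$-structure descends along $\bv^\ash\to\hyc^\ash$. Nothing further is needed.
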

%
\section{Universal formality at the top and bottom}\label{sec:topandbottom}
\begin{lemma}
The bidegree $(n,n)$ (resp. $(0,0)$) cohomology of $(B,\bar{\del})$ is one dimensional, represented by $\Star \overline{\Omega}$ (resp. $1=\Star\Omega$).
\end{lemma}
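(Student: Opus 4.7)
The plan is to transport the computation to ordinary Dolbeault cohomology via the cochain isomorphism $\Star$ stated in the previous paragraph. Concretely, $\Star$ sends $\Gamma(\wedge^p \overline{T}^*M\otimes \wedge^q TM)$ to $\Omega^{n-q,\,p}(M)$ by contracting $q$ slots of the $(n,0)$-form $\Omega$, so the bidegree $(0,0)$ summand of $B$ is identified with the $(n,0)$-forms, and the bidegree $(n,n)$ summand is identified with the $(0,n)$-forms. So the lemma reduces to the assertion that $H^{n,0}(M)$ and $H^{0,n}(M)$ are one-dimensional with generators $[\Omega]$ and $[\overline{\Omega}]$ respectively; transporting these back along $\Star^{-1}=\Star$ gives the asserted representatives $[\Star\Omega]=[1]$ and $[\Star\overline{\Omega}]$.

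For $H^{n,0}(M)$, the $\bar\del$-closed elements are precisely the global holomorphic $n$-forms. Since $\Omega$ is nowhere vanishing, any such form can be written as $f\Omega$ for a holomorphic function $f$; by compactness and connectedness of $M$, such $f$ is a constant, and there are no $(n,-1)$-forms to take $\bar\del$ of, so $H^{n,0}(M)=\mathbb{C}\cdot[\Omega]$. For $H^{0,n}(M)$, every $(0,n)$-form is automatically $\bar\del$-closed for dimensional reasons, and $\overline{\Omega}$ is manifestly such a form. Its class is nonzero, because the Serre pairing $\int_M \Omega\wedge\overline{\Omega}$ is a strictly nonzero integral of a pointwise-positive top form. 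Either invoking Serre duality $H^{0,n}(M)\cong H^{n,0}(M)^\vee$ or the Hodge symmetry $h^{p,q}=h^{q,p}$ available on a compact K\"ahler manifold, this space is one-dimensional, and thus generated by $[\overline{\Omega}]$.

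There is no real obstacle here; the only place care is needed is in the bidegree bookkeeping under $\Star$ (since it shuffles the two indices and reflects one about $n$). The substance is entirely the defining properties of a Calabi-Yau variety (existence of a nowhere-vanishing holomorphic volume form, compactness, connectedness) combined with standard facts about Dolbeault cohomology.
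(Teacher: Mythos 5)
Your proof is correct and follows essentially the same route as the paper: transport the question along $\Star$ to Dolbeault cohomology, identify the $(0,0)$ component with $H^{n,0}(M)$, and use that a global holomorphic function on a compact manifold is constant. The only cosmetic difference is that for the $(n,n)$ case the paper simply invokes conjugation symmetry where you spell out Serre duality/Hodge symmetry and the nonvanishing of $\int_M \Omega\wedge\overline{\Omega}$; the content is the same.
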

\begin{proof}
The proofs are the same up to conjugation.  To see the second statement, note that any $(0,0)$ element of cohomology would correspond to a harmonic form of bidegree $(n,0)$ under $\Star$, which would be a holomorphic functional coefficient times the holomorphic volume form.  That holomorphic functional coefficient is globally defined on the compact manifold $M$, therefore constant.
\end{proof}
\begin{lemma}\label{lemma:formalunittwo}
The class of the unit of $B$ is a formal unit in the transfered homotopy $\bv$-algebra.
\end{lemma}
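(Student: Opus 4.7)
The strategy is to reduce this directly to Lemma~\ref{lemma:formalunit}. That lemma requires two hypotheses: linear transfer data satisfying the side conditions, and a unit $e\in B$ with $\iota([e])=e$. The first hypothesis is the content of Lemma~\ref{lemma:CYsideconditions}, so all of the actual work is in the second.

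First I would identify the unit of $B$. Since the product on $B$ is defined by $\mu\wtm \nu = \Star(\Star \mu \wedge \Star \nu)$ and $\Star$ is a bijection, the unit $e$ is characterized by $\Star e=1$; in the form picture under the $\Star$ isomorphism, it corresponds to the constant function $1\in \Omega^{0,0}(M)$.

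Next I would verify that $e$ coincides with its own harmonic representative, i.e., that $\iota([e])=e$. Under the $\Star$ isomorphism of cochain complexes this is equivalent to checking that $1$ is harmonic in $\Omega^{*,*}(M)$. But $1$ is trivially $\bar{\del}$-closed, and $\bar{\del}^{*}(1)=0$ on bidegree grounds, so $1$ is harmonic; equivalently, the previous lemma already exhibited $1=\Star\Omega$ as \emph{the} harmonic representative of the unique class in bidegree $(0,0)$. Hence $\iota([e])=e$.

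With both hypotheses of Lemma~\ref{lemma:formalunit} verified, that lemma gives the desired conclusion. I do not expect any real obstacle here: the whole content of the proof is recognizing that the unit of the transported product $\wtm$ sits in the lowest bidegree, where harmonicity is automatic.
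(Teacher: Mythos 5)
Your proof is correct and follows the paper's own route: the paper's entire proof is the observation that the statement is a direct corollary of Lemmas~\ref{lemma:formalunit} and~\ref{lemma:CYsideconditions}, and you additionally spell out the hypothesis $\iota([e])=e$, which the paper leaves implicit. One small slip: under $\Star$ the unit of $B$ corresponds in the forms picture to the $(n,0)$-form $\Omega$ rather than to the constant function (the paper's ``$1=\Star\Omega$'' lives in bidegree $(0,0)$ of $B$, which corresponds to $(n,0)$-forms), but your harmonicity argument --- $\bar{\del}$-closed by holomorphy, $\bar{\del}^{*}$-closed for bidegree reasons --- applies verbatim to $\Omega$, so nothing breaks.
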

\begin{proof}
This is a direct corollary of Lemmas~\ref{lemma:formalunit}~and~\ref{lemma:CYsideconditions}.
\end{proof}
So the unit is only involved in the product.  There is a dual statement for the $(n,n)$ cohomology.
\begin{proposition}\label{prop:formaltop}
The product is the only non-zero operation with output in bidegree $(n,n)$.
\end{proposition}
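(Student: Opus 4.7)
The plan is to dualize the argument of Lemma~\ref{lemma:formalunit} using the cyclic pairing $\langle X, Y\rangle := \int_M \Star X \wedge \Star Y$ on $B$ (and hence on $H$). This pairing is nondegenerate and pairs the generators $[1] \in H^{0,0}$ and $[\Star\overline{\Omega}] \in H^{n,n}$ nontrivially via $\int_M \Omega \wedge \overline{\Omega} \neq 0$. The $(n,n)$-component of a transferred operation $\Psi(x_1, \ldots, x_k)$ is therefore determined by $\langle \Psi(x_1, \ldots, x_k), [1]\rangle$, and for any tree $T$ implementing $\Psi$ this equals the chain-level pairing $\langle T(\iota x_1, \ldots, \iota x_k), \iota[1]\rangle_B$; the root $\pi$ can be dropped because the output has polyvector bidegree $(n, n)$, which forces $\bar{\del}T = 0$, so by $\bar{\del}\Omega = 0$ and Stokes' theorem the harmonic projection does not affect the integration against $\Omega$.

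The proof hinges on a cyclic structure on $B$ compatible with the transfer data: the Mukai/Serre-duality pairing should make $B$ into a cyclic $\bv$-algebra. Product cyclicity $\langle A \cdot B, C\rangle = \pm\langle A, B \cdot C\rangle$ is the standard cyclicity of the Mukai pairing on Calabi-Yau polyvectors, and $\langle \Delta A, B\rangle = \pm\langle A, \Delta B\rangle$ is Stokes applied to $\int \del(\Star A \wedge \Star B) = 0$ using $\del\Omega = 0$; these imply cyclicity of the bracket, and in particular $[1, \cdot] = 0$. The main subtle point is verifying self-adjointness of $h = G\bar{\del}^*$ with respect to $\langle -, -\rangle$; since $\bar{\del}^*$ is naturally adjoint only under the Hodge pairing, this requires the K\"ahler identity $\bar{\del}^* = -i[\Lambda, \del]$ together with $\del\Omega = 0$ to control the discrepancy, and verifying this is the main obstacle.

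Granted this cyclic structure, the argument dualizes Lemma~\ref{lemma:formalunit} directly: examining the root vertex $v$ of $T$, if $v$ is a bracket then $\langle [a, b], 1\rangle = \pm\langle a, [b, 1]\rangle = 0$; if $v$ is a product then $\langle a \wtm b, 1\rangle = \pm\langle a, b\rangle$, and when either $a$ or $b$ is the output of an internal subtree (coming through an $h$) rather than a bare $\iota$-leaf, self-adjointness of $h$ reduces the chain-level pairing to one containing either $h\iota$ or $h^2$, both of which vanish by the side conditions. Only the bare binary product tree with two $\iota$-leaf children survives, yielding the claim.
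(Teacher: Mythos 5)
Your route is genuinely different from the paper's: you dualize Lemma~\ref{lemma:formalunit} through the pairing $\langle X,Y\rangle=\int_M\Star X\wedge\Star Y$ and move it across the root vertex using cyclicity of the $\bv$ structure and adjointness of $h$. The paper never introduces a cyclic structure; it works entirely at the root vertex. It records that in bidegree $(n,n)$ the projection is $\pi(\sigma)=\pm[\overline{\Omega}]\int_M\sigma\wedge\Omega$ (your ``drop the root $\pi$'' step, which is correct), uses Lemma~\ref{lemma:easywedge} to rewrite $(\mu\wtm\nu)\wedge\Omega$ as $\mu\wedge\nu$, and then kills each summand because one input to the last vertex is $\bar{\del}^*$-exact (it came through $h=\bar{\del}^*G$) while the other is $\bar{\del}^*$-closed, via $\int_M\mu\wedge\bar{\del}^*\lambda=(\mu,\bar{\del}\star\lambda)=(\bar{\del}^*\mu,\star\lambda)=0$. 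The bracket case is handled by expanding $[\mu,\nu]$ into its three $\bv$ terms and using Stokes plus $[\del,\bar{\del}^*]=0$, rather than your cleaner $\langle[a,b],1\rangle=\pm\langle a,[b,1]\rangle=0$.

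As written, your proof has a hole exactly where you flag one: the adjointness of $h$ for $\langle-,-\rangle$ is asserted as ``the main obstacle'' and never established, and your entire product case rests on it. The good news is that what you actually need is weaker and easier than self-adjointness of $G\bar{\del}^*$: you only use that $\mathrm{im}(h)\subseteq\mathrm{im}(\bar{\del}^*)$ and that $\mathrm{im}(\bar{\del}^*)$ is orthogonal to $\ker(\bar{\del}^*)$ under the pairing $\int_M\alpha\wedge\beta$ on forms --- which, via Lemma~\ref{lemma:easywedge}, is your $\langle a,b\rangle$. That orthogonality is exactly the displayed computation above; no Green's operator and no commutator K\"ahler identity $[\Lambda,\del]$ is required. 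With that substitution your argument closes: the remaining cyclicity facts you invoke are all correct ($\langle A\wtm B,C\rangle=\int_M\Star A\wedge\Star B\wedge\Star C$ is visibly cyclic; $\langle\Delta A,B\rangle=\pm\langle A,\Delta B\rangle$ by Stokes since $\Star\Delta A=\del\Star A$; these give cyclicity of the bracket; and $[\,\cdot\,,1]=0$ because $\Delta 1=\Star\del\Omega=0$, not because of cyclicity). What your approach buys is a conceptual explanation --- the top-degree statement really is Poincar\'e dual to the unit statement, as the paper itself hints --- at the cost of setting up the cyclic $\bv$ package; the paper's root-vertex computation is more elementary and self-contained.
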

The presentation in terms of operadic algebra obscures the fact that this is dual to the unit statement, and makes it slightly messier to show.  We shall need the following lemma:
\begin{lemma}\label{lemma:easywedge}
Let $\mu$ and $\nu$ be homogeneous forms in complementary bidegree.  Then up to sign,
\[(\mu\wtm\nu)\wedge \Omega)=\mu\wedge\nu\]
\end{lemma}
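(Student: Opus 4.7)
The approach is to verify the identity pointwise in local holomorphic coordinates adapted to $\Omega$ and reduce to checking monomial forms.

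First, I would unfold the definition: $(\mu \wtm \nu) \wedge \Omega = \Star(\Star\mu \wedge \Star\nu) \wedge \Omega$. Recall that for a polyvector field $\xi$, the operator $\Star$ is just interior contraction against the volume form, $\Star\xi = \iota_\xi \Omega$, so the entire identity is expressible in terms of $\Omega$ and contractions against it.

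Second, at any point of $M$ I choose holomorphic coordinates in which $\Omega = dz^1 \wedge \cdots \wedge dz^n$ pointwise. Since the identity is pointwise and bilinear, it suffices to check it on monomials $\mu = f\, dz^I \wedge d\bar z^J$ and $\nu = g\, dz^{I'} \wedge d\bar z^{J'}$ with $|I| + |I'| = n$ and $|J| + |J'| = n$. A direct unwinding of $\Star$ shows that $\Star\mu = \pm f\, \partial_{I^c} \otimes d\bar z^J$, where $\partial_{I^c}$ is the holomorphic polyvector field dual to $dz^{I^c}$ under contraction with $\Omega$. Consequently the polyvector wedge $\Star\mu \wedge \Star\nu$ is nonzero precisely when $I^c$ and $(I')^c$ are disjoint and $J \cap J' = \emptyset$, which together with the complementarity of bidegrees forces $I' = I^c$. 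In that case the product collapses to $\pm fg\, \partial_1 \wedge \cdots \wedge \partial_n \otimes d\bar z^{J \cup J'}$, and applying $\Star$ once more yields $\pm fg\, d\bar z^{J \cup J'}$.

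Finally, I compare the two sides: $\mu \wedge \nu$ rearranges to $\pm fg\, \Omega \wedge d\bar z^{J \cup J'}$, while $(\mu \wtm \nu) \wedge \Omega = \pm fg\, d\bar z^{J \cup J'} \wedge \Omega$. These agree up to sign, proving the lemma. The only real obstacle is sign bookkeeping, which the statement explicitly permits us to ignore.
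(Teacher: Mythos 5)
Your proposal is correct and takes essentially the same approach as the paper's proof: a pointwise local-coordinate computation reducing by bilinearity to monomials, with all signs ignored. The only cosmetic difference is that you normalize $\Omega$ to $dz^1\wedge\cdots\wedge dz^n$ at the point in question, whereas the paper writes $\Omega=\eta\, dz_{all}$ and carries the coefficient $\eta$ through the calculation.
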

\begin{proof}
Write $\mu\in\Omega^{i,j}M$ in coordinates as $\mu=\sum \mu_{IJ} d\bar{z}_I dz_J$ (likewise for $\nu$) and write $\Omega=\eta dz_{all}$  Ignoring signs,
\[\Star\mu\wedge \Star\nu=\sum_{I,J}\frac{\mu_{IJ}\nu_{\hat{I}\hat{J}}}{\eta^2}d\bar{z}_{all}\del_{z,all}\]
Applying $\Star$ takes us to
\[\sum_{I,J}\frac{\mu_{IJ}\nu_{\hat{I}\hat{J}}}{\eta}d\bar{z}_{all}\]
and so multiplying by $\Omega$ gives
\[\sum_{I,J}\mu_{IJ}\nu_{\hat{I}\hat{J}}d\bar{z}_{all}dz_{all}=\mu\wedge\nu.\]
\end{proof}

\begin{proof}[Proof of Proposition~\ref{prop:formaltop}]
Any transfered operation is a sum of decorated trees, each of which ends with a product or a bracket.  The inputs to the final operation in one of the transfer formula summands are either leaves of the tree, which come from $\iota$ as harmonic representatives, or in the image of the homotopy $h$, and for any operation other than the product, in particular any higher homotopy operation, at least one of them has to be in the image of $h$.  Specifically, both arguments to the last operation are $\bar{\del}^*$-closed, and at least one is $\bar{\del}^*$-exact. 

We shall show that each of these summands vanishes in the case where the output lands in bidegree $(n,n)$.  


The projection $\pi$, in bidegree $(n,n)$ in $B$, viewed as degree $(0,n)$ in $\Omega^*M$, is:
\[\pi(\sigma) = [\overline{\Omega}]\int_M \sigma\wedge \star\overline{\Omega}=\pm [\overline{\Omega}]\int_M \sigma \wedge \Omega\]

Assume first that the summand operation we are analyzing ends in a product $\pi (\mu\wtm\nu)$.  

We may assume that $\mu$ is $\bar{\del}^*$ closed and $\nu$ is a $\bar{\del}^*$-boundary.  Then, using Lemma~\ref{lemma:easywedge}:
\[
\int_M (\mu\wtm\bar{\del}^* \lambda)\wedge\Omega=
\int_M \mu\wedge\bar{\del}^* \lambda =
(\mu, \bar{\del}\star \lambda)=
(\bar{\del}^* \mu,\star \lambda)=0.\]

Now suppose instead that the summand ends in a bracket $[\mu,\nu]$.  The coefficient of $[\overline{\Omega}]$ is the sum of three terms:
\[\int_M \del (\mu\wtm\nu)\wedge \Omega - \int_M (\del \mu \wtm \nu)\wedge \Omega \pm \int_M (\mu\wtm\del \nu)\wedge \Omega.\]
The first term vanishes by Stokes' Theorem and the other two by the same argument as for the product, since $\del$ and $\bar{\del}^*$ commute.
\end{proof}

\section{Formal formality in low dimension}
The final ingredient we shall need to specialize is the nonzero entries of the Hodge diamond.  
\begin{definition}
A bigraded complex has {\em hypersurface footprint} (of dimension $n$) if it linearly decomposes into a {\em formal part} in bidegree $(i,n-i)$ and a {\em primitive part} in bidegree $(j,j)$, and if the $(0,0)$ and $(n,n)$ bidegree components are one dimensional.\footnote{These bidegrees are reversed from the expected bidegrees in forms.  Also, {\em primitive} is just a convenient shorthand and does not refer to the Lefschetz decomposition.} We say $M$ has hypersurface footprint if the cohomology of $B$ does.
\end{definition}
\begin{lemma}
Every Calabi-Yau variety of dimension $2$ or $3$ has hypersurface footprint.
\end{lemma}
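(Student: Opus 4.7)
The plan is to translate hypersurface footprint into a vanishing statement about Hodge numbers and then dispatch $n=2,3$ by standard Hodge theory. First I would record the dictionary between bidegrees in $B$ and in forms: contracting with $\Omega$ takes $B$-bidegree $(p,q)$ to form-bidegree $(n-q,p)$, so the formal part in $B$-bidegree $(i,n-i)$ picks out $H^{i,i}(M)$ while the primitive part in $B$-bidegree $(j,j)$ picks out $H^{n-j,j}(M)$. Hypersurface footprint is then equivalent to the assertion that $h^{p,q}(M)=0$ whenever $p\ne q$ and $p+q\ne n$, together with $h^{0,0}=h^{n,n}=1$ (automatic from connectedness, and already recorded earlier).

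The verification uses complex conjugation ($h^{p,q}=h^{q,p}$), Serre duality ($h^{p,q}=h^{n-p,n-q}$), the Calabi-Yau identification $K_M\cong\mathcal{O}_M$ furnished by the volume form, and the hypothesis $H^1(M,\mathbb{C})=0$, which gives $h^{1,0}=h^{0,1}=0$. For $n=2$ the only off-bidegrees are $(1,0),(0,1),(2,1),(1,2)$: the first two vanish by hypothesis and the last two by Serre duality. For $n=3$ the off-bidegrees are $(1,0),(0,1),(2,0),(0,2),(3,1),(1,3),(3,2),(2,3)$; as before $(1,0)$ and $(0,1)$ vanish by hypothesis and $(3,2),(2,3)$ by Serre duality. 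The remaining quadruple $(2,0),(0,2),(3,1),(1,3)$ consists of four equal entries by conjugation and Serre, and
\[h^{3,1}=h^1(\Omega^3)=h^1(K_M)=h^1(\mathcal{O}_M)=h^{0,1}=0\]
using $K_M\cong\mathcal{O}_M$.

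The main obstacle is purely bookkeeping: the bidegree reversal between $B$ and forms noted in the footnote requires care, and one must be disciplined about which entries of the Hodge diamond have already been dispatched by which identity. There is no essential geometric input beyond the Calabi-Yau condition and the standard Hodge identities, which is precisely what one expects given the \emph{formal} nature of the formality claim advertised in the introduction.
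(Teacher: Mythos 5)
Your proof is correct and takes essentially the same route as the paper's one-line argument: kill every off-footprint Hodge number using $h^{1,0}=h^{0,1}=0$ together with the standard dualities, the $(0,0)$ and $(n,n)$ one-dimensionality having been handled separately. You are in fact slightly more careful than the paper, which attributes everything to Poincar\'e and Serre duality, whereas, as your computation $h^{3,1}=h^1(K_M)=h^1(\mathcal{O}_M)=0$ makes explicit, the $(2,0)$/$(0,2)$/$(3,1)$/$(1,3)$ entries in dimension $3$ genuinely require the triviality of the canonical bundle (compare $K3\times\mathbb{P}^1$, which has $h^{1,0}=0$ but $h^{2,0}=1$).
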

\begin{proof}
By definition, $H^{1,0}=H^{0,1}=0$. Poincar\'e and Serre duality imply that all other bigraded homology groups outside the hypersurface footprint are zero in these two dimensions.  
\end{proof}

We shall count degrees in order to prove the main theorem. Consider applying transfered homotopy $\hyc$ operations to the cohomology $H$.  
\begin{lemma}
Any higher homotopy $\hyc$ operation $H^{\otimes k}\to H$ on a transfered Calabi-Yau $\bv$-algebra is bidegree $(-\ell,-k+2)$ with $0\le \ell<k-2$.  The $\hyc$ operations in the strict truncation are of bidegree $(-k+2,-k+2)$.
\end{lemma}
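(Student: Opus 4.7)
The plan is a bookkeeping argument that reads off the bidegree shift of each piece of a tree representative. By the end of Section~3 each generator of the transferred homotopy $\hyc$ structure on $H$ can be represented by a linear combination of trivalent trees with $k$ leaves whose $k-1$ internal vertices are decorated by either the product or the bracket of the $\bv$-algebra $B$, and whose $k-2$ internal edges carry the homotopy $h=\bar{\del}^{*}G$. By assumption each representative is homogeneous in the number $p$ of bracket vertices, so it suffices to compute the bidegree of a single tree as a function of $p$ and then identify which values of $p$ correspond to the strict part of the structure and which to the higher part.

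Under the bigrading on $H$ for which $[1]\in H^{(0,0)}$ and $[\Star\overline{\Omega}]\in H^{(n,n)}$, the harmonic inclusion $\iota$ and the harmonic projection $\pi$ preserve bidegree. Each internal edge $h=\bar{\del}^{*}G$ contributes $(0,-1)$, because $G$ is bigrading-preserving and $\bar{\del}^{*}$ is the adjoint of $\bar{\del}$, which in this convention raises the second coordinate by one. Each product vertex contributes $(0,0)$. Each bracket vertex contributes $(-1,0)$: the $\bv$ bracket on $B$ is constructed from the product and $\Delta=\Star\del\Star$, and the transfer of $\del$ to $B$ via $\Star$ shifts the first $H$-coordinate by $-1$. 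Summing these contributions over the tree gives
\[
p\cdot(-1,0)+(k-1-p)\cdot(0,0)+(k-2)\cdot(0,-1)=(-p,-(k-2)).
\]

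It remains to identify the range of $p$. The all-bracket case $p=k-1$ is Lie type, and by the proposition of Section~2 is a coboundary in $\bv^{\ash}$; it descends to zero in $\hyc^{\ash}$ after the quotient and hence contributes no operation. The strict $\hyc$ generator $m_{k}$ of arity $k$ is represented by the Barannikov--Kontsevich shape with one outer product nesting $k-2$ brackets, corresponding to $p=k-2$ and yielding bidegree $(-(k-2),-(k-2))$. Every surviving higher homotopy generator therefore has $p\in\{0,1,\dots,k-3\}$, giving bidegree $(-\ell,-(k-2))$ with $\ell=p$ in the stated range $0\le\ell<k-2$.

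The main obstacle I anticipate is verifying that the bracket vertex really contributes $(-1,0)$, and not, say, $(1,0)$ or $(0,-1)$: pinning this down cleanly requires chasing the $\Star$ duality together with the K\"ahler identities relating $\bar{\del}^{*}$ to $\del$, and confirming that the resulting coordinate is the first one in the paper's bigrading convention. Once these three bidegrees are correct, the rest of the lemma is elementary arithmetic.
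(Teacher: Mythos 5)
Your proposal is correct and follows essentially the same route as the paper: decompose each generator into trivalent trees with $k-1$ vertices and $k-2$ internal edges, assign bidegree $(0,-1)$ to each $h$-decorated edge and $(-1,0)$ to each bracket vertex, discard the all-bracket (Lie type) trees, and identify the strict operations with the one-product trees so that higher operations have at least two products, i.e.\ $\ell\le k-3$. Your added justification of the individual bidegree contributions (and your flagged worry about the bracket's $(-1,0)$) is just an expansion of what the paper asserts without comment.
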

\begin{proof}
An operation is calculated as a sum of trees decorated with products and brackets.  The Lie-type operations all act as zero.  The operations with one vertex decorated with the product are precisely the strict $\hyc$ operations, and all higher homotopy operations come from trees with at least two vertices decorated by products.  Each internal edge in a tree has bidegree $(0,-1)$ and each bracket has bidegree $(-1,0)$.  The total number of internal edges is $k-2$ and the total number of vertices is $k-1$.
\end{proof}
Now assume $M$ has hypersurface footprint.
\begin{lemma}
In any dimension, every higher homotopy operation from the primitive part to the primitive part acts as zero.
\end{lemma}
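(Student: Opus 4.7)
The plan is a pure bidegree-counting argument built on the previous lemma. First I would fix primitive inputs $x_1,\ldots,x_k \in H$, so that each $x_i$ has bidegree $(j_i,j_i)$, and then apply the previous lemma: any higher homotopy $\hyc$ operation on a transferred Calabi--Yau $\bv$-algebra is of bidegree $(-\ell,-k+2)$ with $0\le\ell<k-2$. Consequently the output lies in bidegree
\[\left(\textstyle\sum_i j_i - \ell,\ \sum_i j_i - k + 2\right).\]

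The two components of this bidegree coincide if and only if $\ell = k-2$. The strict inequality $\ell < k-2$ from the previous lemma forbids exactly this, so the output bidegree is off the diagonal $p=q$. By the definition of hypersurface footprint, the primitive part is concentrated on that diagonal, so the projection of the output onto the primitive part vanishes, which is what the lemma asserts.

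The bookkeeping for strict versus higher operations is what makes this work: the strict $\hyc$ operations sit precisely on the diagonal $(-k+2,-k+2)$, and the gap in the bound in the previous lemma is exactly the gap between strict and higher operations. The only thing to take care with is the reading of ``acts as zero'': the output may well be a nonzero element of $H$ sitting in the formal part, but its component in the primitive summand is zero, which is all that is claimed. I do not foresee any genuine obstacle.
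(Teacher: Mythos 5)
Your argument is exactly the paper's: the paper's entire proof is that a higher homotopy operation ``would leave the diagonal since $\ell<k-2$,'' which is precisely your bidegree computation showing the output components differ by $k-2-\ell>0$. Your spelled-out version, including the careful reading of ``acts as zero'' as vanishing of the primitive component, is correct and matches the intended reasoning.
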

\begin{proof}
A higher homotopy operation would leave the diagonal since $\ell<k-2$.
\end{proof}
\begin{lemma}
In dimension four or less, every higher homotopy operation from the primitive part to the formal part acts as zero.
\end{lemma}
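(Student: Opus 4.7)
The plan is to combine the bidegree bookkeeping for higher homotopy $\hyc$ operations with the formal unit property to derive a direct arithmetic contradiction, with no further geometric input.

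First I would use Lemma~\ref{lemma:formalunittwo} to reduce to the case where no input of the operation is the unit class, since any operation with the unit as an argument either vanishes or is the product, which is not a higher homotopy operation. This ensures each input lies in a primitive bidegree $(j_i,j_i)$ with $j_i\ge 1$, and hence $\sum_{i=1}^k j_i\ge k$. Next, I would compute the output bidegree: for a higher operation of shift $(-\ell,-k+2)$ applied to inputs of total bidegree $(\sum j_i,\sum j_i)$, the output sits at $(\sum j_i-\ell,\sum j_i-k+2)$. Demanding that this lie on the anti-diagonal $(a,n-a)$ amounts to $2\sum_i j_i-\ell-k+2=n$, which determines $\ell$ in terms of the input bidegrees.

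The crucial step is then to impose the inequality $\ell<k-2$ from the preceding bidegree lemma. Substituting the expression for $\ell$ and simplifying yields an upper bound of the shape $\sum_i j_i<k+(n-4)/2$. Combined with the lower bound $\sum_i j_i\ge k$ obtained from the formal unit step, this forces $n>4$, contradicting the hypothesis $n\le 4$. Therefore no such higher operation from the primitive part to the formal part can be nonzero.

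I do not expect a genuine obstacle here. The argument is a pure arithmetic mismatch once one remembers that the primitive part of the hypersurface footprint contains the unit class at bidegree $(0,0)$, so that the formal unit lemma must be invoked explicitly to get $j_i\ge 1$. The only thing to double check is that the output not landing in bidegree $(n,n)$ (which by Proposition~\ref{prop:formaltop} is the exclusive province of the product) is automatic, because the formal part has total degree $n$, which differs from the top total degree $2n$ for $n\ge 1$.
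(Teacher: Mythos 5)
Your argument is correct and is essentially the paper's own degree count: the paper bounds the output's total degree below by $5$ (using $\ell\le k-3$ so the operation has total degree at least $-2k+5$, and inputs of total degree at least $2k$) and compares with the formal part's total degree $n\le 4$, which is arithmetically the same inequality you derive by solving for $\ell$ and imposing $\ell<k-2$. Your explicit appeal to Lemma~\ref{lemma:formalunittwo} to rule out the unit class as an argument is a detail the paper leaves implicit in the phrase ``each argument has degree at least two.''
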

\begin{proof}
Each argument has degree at least two.  The minimal degree of a higher homotopy operation is $-2k+5$ and the minimal total degree of the inputs is $2k$ so the minimal degree of the output is $5$.
\end{proof}
\begin{lemma}
In dimension four or less, every higher homotopy operation with a single formal argument acts as zero.
\end{lemma}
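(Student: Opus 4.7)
The plan is to run a bidegree count analogous to the two preceding lemmas, then reduce whatever survives to an output in bidegree $(n,n)$ and kill it with Proposition~\ref{prop:formaltop}. By Lemma~\ref{lemma:formalunittwo} I may assume none of the primitive arguments is the unit, so each of the $k-1$ primitive inputs has bidegree $(j_s, j_s)$ with $j_s \geq 1$ and $P := \sum_s j_s \geq k-1$. Writing the single formal input as $(i, n-i)$ and using the operation bidegree $(-\ell, -k+2)$ with $0 \leq \ell < k-2$ from the preceding lemma, the output lands in bidegree
\[
\bigl( i + P - \ell,\; n - i + P - k + 2 \bigr).
\]

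First I would check when this output can lie in the hypersurface footprint. If the output is formal, its two components must sum to $n$, forcing $\ell = 2P - k + 2$; but then $\ell < k-2$ demands $P < k-2$, contradicting $P \geq k-1$. So the output must be primitive, in which case equating the two components gives $\ell = 2i + k - n - 2$, and the same inequality $\ell < k-2$ rearranges to $i < n/2$. The constraint that the primitive bidegree $a = n - i + P - k + 2$ satisfy $a \leq n$ yields $P \leq k - 2 + i$, and combined with $P \geq k-1$ this forces $i \geq 1$.

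In dimension $n \leq 4$, the strict inequalities $1 \leq i < n/2$ leave only $i = 1$ with $n \in \{3,4\}$. Plugging $i = 1$ back in pins $P = k-1$, so every other input is a primitive class in bidegree $(1,1)$, and pins $a = n$, so the output lands exactly in bidegree $(n,n)$. Proposition~\ref{prop:formaltop} then says the only nonzero operation with output in bidegree $(n,n)$ is the product, and no higher homotopy operation is the product; hence every remaining case acts as zero.

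The only genuinely substantive ingredient is the appeal to Lemma~\ref{lemma:formalunittwo} to exclude the unit from the primitive arguments and thereby secure $P \geq k-1$; beyond that the argument is a purely combinatorial chase of bidegrees that leaves exactly the $(n,n)$ output as the sole survivor, which is then disposed of by the universal top-degree statement.
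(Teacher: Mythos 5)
Your proof is correct and uses exactly the same ingredients as the paper's: the bidegree $(-\ell,-k+2)$ formula, the exclusion of the unit via Lemma~\ref{lemma:formalunittwo} so that each primitive input has degree at least $2$, and Proposition~\ref{prop:formaltop}. The paper compresses the whole count into one total-degree inequality --- the minimal output degree is $(2k+n-2)+(-2k+5)=n+3$, which for $n\le 4$ exceeds $2n-2$, the largest degree available to the target of a higher homotopy operation once $(n,n)$ is excluded by Proposition~\ref{prop:formaltop} --- whereas you track the two bidegree components separately and eliminate the single surviving case $(n,n)$ directly with that proposition; both routes are valid.
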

\begin{proof}
The minimal degree of a higher homotopy operation is $-2k+5$ and the minimal total degree of the inputs is $2k+n-2$, so the minimal degree of the output is $n+3$.  For $n<5$, $n+3$ is greater than $2n-2$, which is the top degree possible for the target of a higher homotopy operation.
\end{proof}
\begin{lemma}
In any dimension above one, the product is the only non-zero operation with two or more formal arguments.
\end{lemma}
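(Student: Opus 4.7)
The plan is a direct degree count, combined with Proposition~\ref{prop:formaltop} to handle the boundary case. Let $T$ be a transferred homotopy $\hyc$-operation of arity $k$ with $m \ge 2$ formal arguments and $k - m$ primitive arguments. The only binary operation of the transferred structure is the product, so one may immediately assume $k \ge 3$. By Lemma~\ref{lemma:formalunittwo}, any operation with the unit as an argument vanishes unless it is the product; this lets me assume no argument is the unit, so every primitive argument lies in a bidegree $(j,j)$ with $j \ge 1$ and has total degree at least $2$.

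The total degree of the inputs is therefore at least $mn + 2(k - m)$. A generator in arity $k$ has bidegree $(-\ell, -k + 2)$ for some $0 \le \ell \le k - 2$, with $\ell = k - 2$ in the strict case and $\ell \le k - 3$ for higher homotopy operations. Combining these, the output has total degree at least
\[mn + 2(k - m) - \ell - k + 2 = m(n - 2) + k + 2 - \ell.\]

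For higher homotopy operations $\ell \le k - 3$ and this bound is at least $m(n-2) + 5 \ge 2(n-2) + 5 = 2n + 1$, exceeding the maximum cohomological degree $2n$; such operations vanish on degree grounds alone. For strict operations $\ell = k - 2$ and the bound becomes $m(n-2) + 4 \ge 2n$. When this is a strict inequality the output again vanishes for degree reasons. When it is an equality the output has total degree exactly $2n$, which in a bigraded complex with indices in $[0,n] \times [0,n]$ forces it into bidegree $(n,n)$; since $k \ge 3$ the operation is not the product, and Proposition~\ref{prop:formaltop} forces the output to vanish.

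The main pitfall is careful bookkeeping of bidegrees and separating the strict and higher homotopy contributions; the only nontrivial ingredient beyond pure degree counting is the use of Proposition~\ref{prop:formaltop} in the single boundary subcase where the output can just barely reach the top bidegree.
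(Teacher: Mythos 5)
Your proof is correct and follows essentially the same route as the paper's: a degree count showing the output must land in total degree at least $2n$, hence in bidegree $(n,n)$, at which point Proposition~\ref{prop:formaltop} rules out everything but the product. You are merely more explicit than the paper about excluding the unit via Lemma~\ref{lemma:formalunittwo} and about separating the strict case ($\ell = k-2$) from the higher homotopy case, but the substance is identical.
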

\begin{proof}
The minimal total degree of the inputs is $2n+2k-4$, and the minimal degree of the output is $2n$, which only accepts the product.
\end{proof}
The preceding four lemmas suffice to prove the main theorem.
\begin{remark} In the Calabi-Yau example, the hypercommutative algebra structure on the homology of $B$ can also be viewed as a linear pencil of torsion free flat connections on the moduli space of B-models. The hypercommutative structure constitutes the main part of a Frobenius manifold.  The homotopy hypercommutative structure is a kind of pencil of superconnections.  The main result, stated in this language, implies that for the B-model with a Calabi-Yau threefold target, the higher parts of the superconnections all vanish, leaving a pencil of ordinary connections~\cite{Park:SCQFTFM}.
\end{remark}
\bibliography{references-1}{}

\begin{thebibliography}{1}

\bibitem{BarannikovKontsevich:FMFLAPV}
S.~Barannikov and M.~Kontsevich, ``Frobenius manifolds and formality of {L}ie
  algebras of polyvector fields,'' {\em Int. Math. Res. Not.}, vol.~4,
  pp.~201--215, 1998.

\bibitem{DrummondColeVallette:MMBVO}
G.~C. Drummond-Cole and B.~Vallette, ``The minimal model for the
  {B}atalin-{V}ilkovisky operad.'' arXiv:1105.2008v1.

\bibitem{LodayVallette:AO}
J.~L. Loday and B.~Vallette, ``Algebraic operads.'' available at
  http://math.unice.fr/\textasciitilde brunov, 2010.

\bibitem{Getzler:OMSGZRS}
E.~Getzler, ``Operads and moduli spaces of genus 0 {R}iemann surfaces,'' in
  {\em The moduli space of curves}, vol.~129 of {\em Progr. Math.},
  pp.~199--230, Boston, Massachusetts: Birkh\"auser Boston, 1995.

\bibitem{Getzler:BVATDTFT}
E.~Getzler, ``Batalin-{V}ilkovisky algebras and two-dimensional topological
  field theories,'' {\em Comm. Math. Phys.}, vol.~159, pp.~265--285, 1994.

\bibitem{GalvezCarilloTonksVallette:HBVA}
I.~Galvez-Carrillo, A.~Tonks, and B.~Vallette, ``Homotopy
  {B}atalin-{V}ilkovisky algebras.'' arXiv:0907.2246, 2009.

\bibitem{GriffithsHarris:PAG}
P.~Griffiths and J.~Harris, {\em Principles of Algebraic Geometry}.
\newblock New York, New York: John Wiley \& Sons, 1978.

\bibitem{Merkulov:SHAKM}
S.~Merkulov, ``Strongly homotopy algebras of a {K}\"ahler manifold,'' {\em Int.
  Math. Res. Not.}, vol.~1999, pp.~153--164, 1998.

\bibitem{Park:SCQFTFM}
J.-S. Park, ``Semi-classical quantum fields theories and {F}robenius
  manifolds,'' {\em Lett. Math. Phys.}, vol.~81, pp.~41--59, 2007.

\end{thebibliography}
\bibliographystyle{ieeetr}
\end{document}